\newtheorem{theorem}{Theorem}
\newtheorem{lemma}{Lemma}
\newtheorem{corollary}{Corollary}
\theoremstyle{definition}
\newtheorem{definition}{Definition}
\theoremstyle{remark}
\title{Explicit Entropic Constructions for Coverage, Facility Location, and Graph Cuts}
\author{
\IEEEauthorblockN{Rishabh Iyer}
\IEEEauthorblockA{
Computer Science Department\\
The University of Texas at Dallas\\
Richardson, TX, USA\\
rishabh.iyer@utdallas.edu
}
}
\begin{document}
\maketitle

\begin{abstract}
Shannon entropy is a polymatroidal set function and lies at the foundation of information theory, yet the class of entropic polymatroids is strictly smaller than the class of all submodular functions.
In parallel, \emph{submodular and combinatorial information measures} (SIMs) have recently been proposed as a principled framework for extending entropy, mutual information, and conditional mutual information to general submodular functions, and have been used extensively in data subset selection, active learning, domain adaptation, and representation learning.
This raises a natural and fundamental question: \emph{are the monotone submodular functions most commonly used in practice entropic}?

In this paper, we answer this question in the affirmative for a broad class of widely used polymatroid functions.
We provide explicit entropic constructions for set cover and coverage functions, facility location, saturated coverage, concave-over-modular functions via truncations, and monotone graph-cut–type objectives.
Our results show that these functions can be realized exactly as Shannon entropies of appropriately constructed random variables.
As a consequence, for these functions, submodular mutual information coincides with classical mutual information, conditional gain specializes to conditional entropy, and submodular conditional mutual information reduces to standard conditional mutual information in the entropic sense.
These results establish a direct bridge between combinatorial information measures and classical information theory for many of the most common submodular objectives used in applications.
\end{abstract}
\begin{IEEEkeywords}
Submodular Set Functions, Information Theory and Combinatorics, Shannon Entropy
\end{IEEEkeywords}

\section{Introduction}

\subsection{Background: submodularity and entropy}
Let $V$ be a finite ground set. A set function $f:2^V\to \mathbb{R}$ is \emph{submodular} if it satisfies the
diminishing returns property: for all $A\subseteq B\subseteq V$ and $i\in V\setminus B$~\cite{Fujishige2005,bilmes2022submodularity},
\begin{equation}
f(A\cup\{i\})-f(A)\ \ge\ f(B\cup\{i\})-f(B).
\label{eq:dr_submod}
\end{equation}
Equivalently, $f$ is submodular iff for all $A,B\subseteq V$,
\begin{equation}
f(A)+f(B)\ \ge\ f(A\cup B)+f(A\cap B).
\label{eq:submod_lattice}
\end{equation}
Submodular functions are central in combinatorial optimization and have a rich polyhedral structure
~\cite{Edmonds1970,Fujishige2005,Schrijver2003,bilmes2022submodularity,iyer2015polyhedral}.

Shannon entropy provides the canonical example of a submodular function.
For jointly distributed random variables $X_1,\ldots,X_n$, define $h(A):=H(X_A)$ where
$X_A := (X_i)_{i\in A}$. Then $h$ is normalized, monotone, and submodular ~\cite{Shannon1948,Yeung2008}.
This entropic--submodular connection is foundational in information theory; however, not every submodular function
is entropic, and characterizing the gap between the entropy region and the polymatroid cone is a central topic
~\cite{ZhangYeung1997,ZhangYeung1998,DoughertyFreilingZeger2011,Matus2013AlmostEntropic}.

\subsection{Information measures and their submodular counterparts}
For random variables, classical information measures can be written as linear combinations of entropies:
mutual information $I(X;Y)=H(X)+H(Y)-H(X,Y)$ and conditional entropy
$H(X\mid Y)=H(X,Y)-H(Y)$ \cite{Shannon1948,Yeung2008}.
Motivated by this algebraic structure, \emph{submodular and combinatorial information measures} (SIMs)
extend these primitives to general submodular functions by substituting $H(\cdot)$ with an arbitrary submodular
$f(\cdot)$ \cite{iyer2021generalized,iyer2021submodular,asnani2021independence}.
For example, given a set function $f:2^V\to \mathbb{R}$ and disjoint sets $A,B,C\subseteq V$, the
\emph{submodular mutual information}, \emph{conditional gain} (submodular conditional entropy), and
\emph{submodular conditional mutual information} are defined respectively as
\begin{align}
I_f(A;B) &:= f(A)+f(B)-f(A\cup B), \label{eq:smi_def}\\
H_f(A\mid B) &:= f(A\cup B)-f(B), \label{eq:cond_gain_def}\\
I_f(A;B\mid C) &:= f(A\cup C)+f(B\cup C) \nonumber \\
                &-f(C)-f(A\cup B\cup C). \label{eq:scmi_def}
\end{align}
These measures retain many information-like properties (e.g., nonnegativity under mild conditions, chain rules,
and useful optimization structure) while applying to a much larger class of objectives
\cite{iyer2021generalized,asnani2021independence}.

SIMs have been used extensively in data-centric learning, including active learning in realistic regimes
\cite{kothawade2021similar, kothawade2022talisman}, guided data subset selection \cite{kothawade2022prism}, selection under distribution shift
\cite{karanam2022orient}, semi-supervised and meta-learning \cite{li2022platinum}, and representation learning via
combinatorial objectives \cite{majee2024score, majee2024smile}. Recent theory further analyzes SIMs in targeted subset selection settings
\cite{beck2024theoretical}.

\subsection{Question and contributions}
Across these applications, many effective SIM instantiations are built from a small family of classical submodular
functions used as modeling primitives, notably coverage/set cover, facility location, truncations (concave-over-modular),
and graph-based objectives \cite{Edmonds1970,Fujishige2005,Schrijver2003}.
This motivates a fundamental question:

\begin{quote}
\emph{Are the submodular functions most commonly used in applications entropic?}
\end{quote}

Equivalently, given a practical submodular objective $f:2^V\to\mathbb{R}$, does there exist a collection of random variables
$\{X_i\}_{i\in V}$ such that $f(A)=H(X_A)$ for all $A\subseteq V$ (up to a fixed scaling)?
A priori, this need not hold since the entropic region is strictly smaller than the polymatroid cone
\cite{ZhangYeung1998,DoughertyFreilingZeger2011}.

\textbf{Main result.}
We answer this question in the affirmative for a broad class of widely used objectives by giving explicit, constructive
entropic realizations for:
(i) set cover / weighted coverage,
(ii) facility location,
(iii) saturated coverage,
(iv) concave-over-modular functions via truncations (including cardinality and weighted truncations), and
(v) a monotone graph-cut--type objective.
Our constructions are simple and modular, relying on interpretable ``shared'' and ``private'' random components.

\textbf{Implication for SIM.}
For these function classes, the SIM primitives in \eqref{eq:smi_def}--\eqref{eq:scmi_def} become \emph{exact instances}
of classical information measures: submodular mutual information becomes mutual information, conditional gain becomes
conditional entropy, and submodular conditional mutual information becomes conditional mutual information in the entropic
sense. Thus, for a large portion of submodular objectives used in recent application pipelines
\cite{kothawade2021similar,kothawade2022prism,karanam2022orient,li2022platinum,majee2024score,majee2024smile}, the combinatorial framework
admits a direct Shannon-theoretic interpretation.

\textbf{Organization.}
Section~\ref{sec:entropy_polyhedral} summarizes relevant background on the entropy region and polyhedral polymatroids.
Section~\ref{sec:entropic_constructions} then presents explicit entropic constructions for each function family. In the rest of the paper, we provide entropic construction for several commonly used submodular functions: set cover (Sec~\ref{sec:set_cover}), facility location (Sec~\ref{sec:fac_loc}), concave over modular functions (Sec~\ref{sec:concavemod}), saturated coverage (Sec~\ref{sec:saturatedcov}), and graph cut (Sec~\ref{sec:graph-cut}).

% ============================================================
% Entropic realizations of common submodular functions
% (Set cover / Coverage, Facility Location, Truncations and
% Concave-over-modular, Saturated coverage, Monotone Graph-Cut)
% ============================================================

\section{Entropy Region and Polyhedral Polymatroids}
\label{sec:entropy_polyhedral}

Entropy functions form a structured subset of polymatroids.
Let $\Gamma_n$ denote the \emph{polymatroid cone} on $n$ elements, i.e., the set of all set functions
$h:2^{[n]}\to\mathbb{R}$ that are normalized ($h(\emptyset)=0$), monotone, and submodular.
It is well known that $\Gamma_n$ is a polyhedral cone described by the elemental (Shannon-type) inequalities
(equivalently, submodularity inequalities) \cite{Yeung2008,Fujishige2005,Schrijver2003,iyer2015polyhedral}.

Let $\Gamma_n^\ast$ denote the set of \emph{entropic} functions, i.e., those $h$ for which there exist random variables
$X_1,\ldots,X_n$ such that $h(A)=H(X_A)$ for all $A\subseteq[n]$ \cite{Shannon1948,Yeung2008}.
The closure $\overline{\Gamma_n^\ast}$ is the \emph{almost-entropic} region.
A central theme in information theory is to understand $\Gamma_n^\ast$ and its relationship to $\Gamma_n$.

For $n\le 3$, entropic functions are completely characterized by Shannon-type inequalities, and hence
$\overline{\Gamma_n^\ast}=\Gamma_n$. In contrast, for $n\ge 4$, the inclusion is strict:
\[
\overline{\Gamma_n^\ast}\ \subsetneq\ \Gamma_n,
\]
and additional (non-Shannon-type) information inequalities are required to describe (or outer bound) the entropic region
\cite{ZhangYeung1997,ZhangYeung1998,DoughertyFreilingZeger2011}.
This strict separation implies the existence of polyhedral (i.e., Shannon-type) polymatroids that are not entropic.

The above separation naturally raises the question studied in this paper.
While not every polyhedral polymatroid is entropic, we show that many of the most frequently used submodular functions in
applications admit explicit entropic constructions.
In particular, the function families we consider---coverage/set cover, facility location, truncations/concave-over-modular,
saturated coverage, and monotone graph-cut objectives---all lie inside $\Gamma_n^\ast$ (up to a fixed scaling).
This provides a direct bridge between classical information theory and the SIM framework
\cite{iyer2021generalized,iyer2021submodular,asnani2021independence}.

\subsection{Examples of known entropic functions.}
Before turning to our constructions, it is useful to recall several classes of submodular functions that are
already known to be entropic or admit immediate entropic interpretations.

First, note that entropic functions are monotone so for a submodular function to be an entropic function, it must be polyhedral (monotone, normalized, and submodular). 

\emph{(Positive) Modular functions} are trivially entropic.
Given weights $(w_i)_{i=1}^n$, the modular function $f(A)=\sum_{i\in A} w_i$ is realized by independent random variables
$X_i$ with $H(X_i)=w_i$, yielding $H(X_A)=\sum_{i\in A} H(X_i)=f(A)$. We need $w_i \geq 0$ for this to hold. 

A closely related example is \emph{rank functions of linear matroids}.
If $U$ is a random vector over a finite field and $X_i$ are linear measurements of $U$, then
$A\mapsto H(X_A)$ equals the rank of the corresponding measurement matrix (up to a $\log q$ factor),
recovering uniform matroid rank functions as entropic.
This construction underlies classical connections between entropy, linear algebra, and matroid theory
\cite{Yeung2008}.

Another canonical example is the \emph{log-determinant} function.
For a zero-mean Gaussian random vector $X\sim \mathcal{N}(0,\Sigma)$, the differential entropy satisfies
\[
h(X_A)=\tfrac12 \log \det(\Sigma_{A,A}) + \text{const},
\]
showing that log-determinant objectives, widely used in experimental design, sensor placement, and Gaussian processes, 
are entropic up to an additive constant.
This provides an important continuous-valued instance of entropic submodularity.

Beyond these cases, relatively few explicit entropic realizations are known for common polyhedral submodular objectives.
In particular, classical combinatorial functions such as coverage, facility location, truncations, and graph-cut–type
objectives are typically treated as abstract submodular functions, without an accompanying Shannon-theoretic
interpretation.
Our results show that many of these widely used objectives also admit exact entropic realizations, despite the fact that
the entropy region is strictly smaller than the polymatroid cone.

\section{Entropic Constructions for Common Submodular Functions}
\label{sec:entropic_constructions}

We use the standard definition of an entropic set function.

\begin{definition}[Entropic set function]
A set function $f:2^{[n]}\to\mathbb{R}_{\ge 0}$ is \emph{entropic} if there exist random variables
$X_1,\dots,X_n$ such that for all $A\subseteq [n]$,
\[
f(A)=H(X_A),\qquad X_A := (X_i)_{i\in A}.
\]
\end{definition}

\noindent\textbf{Connection to SIM.}
If $f(A)=H(X_A)$ is entropic, then the SIM primitives built from $f$ coincide exactly with classical quantities.
For example, $I_f(A;B)=I(X_A;X_B)$ and $H_f(A\mid B)=H(X_A\mid X_B)$, with an analogous identity for conditional mutual
information. This motivates our focus on explicit entropic realizations.

Throughout, we freely use the following elementary fact: if $(Z_t)_{t\in T}$ are mutually independent,
then for any $S\subseteq T$ we have $H((Z_t)_{t\in S})=\sum_{t\in S}H(Z_t)$, and if a random variable
appears multiple times in a tuple it does not increase the joint entropy.

\vspace{0.3em}
\noindent\textbf{Remark on weights.}
For simplicity, we state results with $H(Z)=w$ for arbitrary $w\ge 0$. For discrete Shannon entropy,
one may either (i) scale all weights by a common constant and use uniform random variables on large alphabets
to realize integer entropies exactly, or (ii) state an ``almost-entropic'' version by approximating real
weights arbitrarily well with large alphabets. These standard quantization/scaling steps are omitted below.

In the next sections, we will outline the entropic constructions for the Weighted Coverage/Set Cover, Facility Location function, Truncations and Concave over Modular, Saturated Coverage, and Monotonic Graph Cut function. 

% ------------------------------------------------------------
\section{Weighted Coverage / Set Cover is Entropic} \label{sec:set_cover}

% ------------------------------------------------------------

\begin{theorem}[Weighted coverage (set cover) is entropic]
\label{thm:coverage_entropic}
Let $U$ be a universe with weights $(w_u)_{u\in U}$ where $w_u\ge 0$. Let the ground set be $\Omega=[n]$
with subsets $U_i\subseteq U$ for $i\in[n]$. Define
\[
f(A) := \sum_{u\in \cup_{i\in A} U_i} w_u,\qquad A\subseteq [n].
\]
Then $f$ is entropic.
\end{theorem}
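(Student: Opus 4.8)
The plan is to realize each universe element as an independent ``atom'' of randomness and let each ground-set element $i$ carry exactly the atoms it covers. Concretely, for every $u\in U$ introduce a random variable $Z_u$ with $H(Z_u)=w_u$, chosen so that the family $(Z_u)_{u\in U}$ is mutually independent; by the Remark on weights, such $Z_u$ exist for any $w_u\ge 0$ (using large-alphabet uniform variables, together with a common rescaling of all weights if one insists on exact discrete entropies). Then define $X_i:=(Z_u)_{u\in U_i}$ for each $i\in[n]$, so that $i$ ``sees'' precisely the randomness indexed by the set $U_i$ it covers.

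Next I would compute $H(X_A)$ for an arbitrary $A\subseteq[n]$. By construction $X_A=(X_i)_{i\in A}$ is the tuple listing $Z_u$ for every pair $(i,u)$ with $i\in A$ and $u\in U_i$; as a random variable this tuple is a deterministic function of $(Z_u)_{u\in\cup_{i\in A}U_i}$ and vice versa, since repeated occurrences of the same $Z_u$ do not change the joint entropy (the elementary fact recalled before the constructions). Because the $Z_u$ are mutually independent, $H\big((Z_u)_{u\in S}\big)=\sum_{u\in S}H(Z_u)$ for every $S\subseteq U$; taking $S=\cup_{i\in A}U_i$ gives $H(X_A)=\sum_{u\in\cup_{i\in A}U_i}w_u=f(A)$. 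The case $A=\emptyset$ is the empty union, yielding $H(X_\emptyset)=0=f(\emptyset)$, consistent with normalization.

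I do not expect a genuine obstacle here: the argument is essentially bookkeeping, and the only points needing mild care are the deduplication step (formally, equality of entropy follows because $X_A$ and $(Z_u)_{u\in\cup_{i\in A}U_i}$ determine one another) and the standard quantization used to turn arbitrary nonnegative real weights into exact discrete entropies, both of which are routine and already flagged in the Remark. As a byproduct this places $f$ inside $\Gamma_n^\ast$ (up to the stated scaling) and makes the SIM primitives for $f$ coincide with Shannon quantities; e.g.\ $I_f(A;B)=H(X_A)+H(X_B)-H(X_{A\cup B})=I(X_A;X_B)$ is exactly the total weight of universe elements covered both by $\cup_{i\in A}U_i$ and by $\cup_{j\in B}U_j$.
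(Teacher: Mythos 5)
Your construction is exactly the paper's: independent atoms $Z_u$ with $H(Z_u)=w_u$, the assignment $X_i=(Z_u)_{u\in U_i}$, and the deduplication-plus-independence computation of $H(X_A)$. The proposal is correct and matches the paper's proof step for step (with a bit of extra care on the $A=\emptyset$ case and the SIM byproduct).
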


\begin{proof}
For each $u\in U$, let $Z_u$ be independent random variables with $H(Z_u)=w_u$.
For each $i\in[n]$, define
\[
X_i := (Z_u)_{u\in U_i}.
\]
For any $A\subseteq[n]$, the joint tuple $X_A=(X_i)_{i\in A}$ contains exactly the collection of variables
$\{Z_u: u\in \cup_{i\in A}U_i\}$ (possibly with duplicates). Removing duplicates does not change joint entropy,
and independence yields
\begin{align}
H(X_A) &=H\big((Z_u)_{u\in \cup_{i\in A}U_i}\big) \nonumber \\
&=\sum_{u\in \cup_{i\in A}U_i} H(Z_u) \nonumber \\
&=\sum_{u\in \cup_{i\in A}U_i} w_u \nonumber \\ &=f(A).
\end{align}
\end{proof}

% ------------------------------------------------------------
\section{Facility Location is Entropic (Max-by-Nesting Construction)} \label{sec:fac_loc}
% ------------------------------------------------------------

We first show that a max-of-weights set function is entropic via nested functional dependence.

\begin{lemma}[Max is entropic via nested suffix variables]
\label{lem:max_entropic}
Let $w_1\ge w_2\ge \cdots \ge w_m\ge 0$ and define $g:2^{[m]}\to\mathbb{R}_{\ge 0}$ by
$g(\emptyset)=0$ and for $A\neq \emptyset$,
\[
g(A):=\max_{a\in A} w_a.
\]
Then $g$ is entropic.
\end{lemma}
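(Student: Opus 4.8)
The plan is to mirror the coverage construction of Theorem~\ref{thm:coverage_entropic}, but with a \emph{nested} family of index sets, so that ``taking a union of the $X_i$'' corresponds to ``taking the smallest index in $A$'', which in turn is a maximum of weights. The key elementary observation is that, since $w_1\ge w_2\ge\cdots\ge w_m$, for any nonempty $A$ we have $g(A)=\max_{a\in A}w_a=w_{\min A}$; so it suffices to build random variables $X_1,\dots,X_m$ whose joint entropy on $A$ depends only on $\min A$.

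Concretely, I would introduce the nonnegative ``increments'' $\Delta_k := w_k - w_{k+1}\ge 0$ for $1\le k\le m-1$ together with $\Delta_m := w_m\ge 0$, and let $T_1,\dots,T_m$ be mutually independent random variables with $H(T_k)=\Delta_k$ (invoking the quantization/scaling remark if exact discrete entropies are wanted). I then define the nested suffix tuples
\[
X_i := (T_i, T_{i+1}, \dots, T_m),\qquad i\in[m].
\]
By mutual independence, $H(X_i)=\sum_{k=i}^{m}H(T_k)=\sum_{k=i}^{m-1}(w_k-w_{k+1})+w_m=w_i=g(\{i\})$, which verifies the singleton case.

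For a general nonempty $A$ with $j:=\min A$, the tuple $X_A=(X_i)_{i\in A}$ lists exactly the variables $\{T_k : k\ge i\text{ for some }i\in A\}=\{T_k: k\ge j\}$, because every suffix starting at some $i\in A$ is contained in the suffix starting at $j$. Deleting repeated entries does not change joint entropy, so $H(X_A)=\sum_{k=j}^{m}H(T_k)=w_j=g(A)$, and $H(X_\emptyset)=0=g(\emptyset)$. I do not expect a serious obstacle here: the only points to handle carefully are the degenerate cases (for instance $w_m=0$, which just makes $T_m$ deterministic, or ties among the $w_a$, which only produce harmless duplicate blocks) and the passage from real-valued weights to genuine discrete Shannon entropies, which is covered by the standard scaling argument already noted.
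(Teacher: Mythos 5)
Your proposal is correct and is essentially identical to the paper's own proof: the same increment decomposition $\delta_r=w_r-w_{r+1}$ (with $w_{m+1}:=0$), the same independent blocks, and the same nested suffix tuples $X_i=(B_i,\dots,B_m)$, with the joint entropy of $X_A$ collapsing to that of the suffix starting at $\min A$. The only cosmetic difference is that you phrase the collapse as ``the union of the suffixes equals the longest suffix'' whereas the paper phrases it as each $X_a$ being a deterministic function of $X_{\min A}$; these are the same observation.
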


\begin{proof}
Let $w_{m+1}:=0$ and define increments $\delta_r := w_r-w_{r+1}\ge 0$ for $r=1,\dots,m$.
Let $B_1,\dots,B_m$ be mutually independent random variables with $H(B_r)=\delta_r$.
Define for each $r\in[m]$,
\[
X_r := (B_r,B_{r+1},\dots,B_m).
\]
Then $H(X_r)=\sum_{t=r}^m H(B_t)=\sum_{t=r}^m \delta_t = w_r$.
For any nonempty $A\subseteq[m]$ let $r^\star=\min(A)$, which is the index with maximum weight since
$(w_r)$ is nonincreasing. The tuple $X_{r^\star}$ contains all variables $(B_t)_{t\ge r^\star}$,
and every $X_a$ for $a\in A$ is a deterministic function of $X_{r^\star}$ (it is a suffix).
Hence $(X_a)_{a\in A}$ contains no additional independent information beyond $X_{r^\star}$, so
\begin{align}
H(X_A) =H(X_{r^\star}) =w_{r^\star} =\max_{a\in A} w_a =g(A).
\end{align}
\end{proof}

\begin{theorem}[Facility location is entropic]
\label{thm:facility_location_entropic}
Let $V=[n]$ and let $S=(s_{ij})_{i,j\in V}$ with $s_{ij}\ge 0$. Define the facility location function
\[
f(A):=\sum_{i\in V} \max_{a\in A} s_{ia},\qquad A\subseteq V,
\]
with the convention $\max_{a\in \emptyset} s_{ia}=0$. Then $f$ is entropic.
\end{theorem}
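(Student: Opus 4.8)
The plan is to build the entropic realization for facility location by combining the max-is-entropic construction of Lemma~\ref{lem:max_entropic} across the $n$ ``client'' indices $i\in V$, using independent blocks of randomness for each $i$. For each fixed $i\in V$, consider the weight vector $(s_{ia})_{a\in V}$; sort it so that $s_{i\pi_i(1)}\ge s_{i\pi_i(2)}\ge\cdots\ge s_{i\pi_i(n)}\ge 0$ for some permutation $\pi_i$, and apply the construction in the proof of Lemma~\ref{lem:max_entropic}. Concretely, set $s_{i\pi_i(n+1)}:=0$, define increments $\delta^{(i)}_r := s_{i\pi_i(r)} - s_{i\pi_i(r+1)}\ge 0$, and introduce independent random variables $B^{(i)}_r$ with $H(B^{(i)}_r)=\delta^{(i)}_r$ — all of these $B^{(i)}_r$ taken mutually independent across both $i$ and $r$. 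Then for each $a\in V$ and each $i$, let $Y^{(i)}_a$ be the suffix tuple $(B^{(i)}_r : r\ge \pi_i^{-1}(a))$, so that $H(Y^{(i)}_a)=s_{ia}$ and, as in the lemma, for any nonempty $A$ the joint $\bigl(Y^{(i)}_a\bigr)_{a\in A}$ has entropy $\max_{a\in A}s_{ia}$.

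Next I would assemble the ground-set variables by stacking the per-client components: define
\[
X_a := \bigl(Y^{(1)}_a, Y^{(2)}_a, \dots, Y^{(n)}_a\bigr),\qquad a\in V.
\]
For any $A\subseteq V$, the joint tuple $X_A=(X_a)_{a\in A}$ collects, for each $i\in V$, the subfamily $\bigl(Y^{(i)}_a\bigr)_{a\in A}$, and these $n$ subfamilies involve disjoint, mutually independent blocks of $B$-variables. Hence the joint entropy additively decomposes over $i$:
\[
H(X_A)=\sum_{i\in V} H\bigl((Y^{(i)}_a)_{a\in A}\bigr)=\sum_{i\in V}\max_{a\in A}s_{ia}=f(A),
\]
where the middle equality is exactly Lemma~\ref{lem:max_entropic} applied within client block $i$, and the empty-set convention $\max_{a\in\emptyset}s_{ia}=0$ matches $H(X_\emptyset)=0$. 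This establishes that $f$ is entropic.

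The only genuinely delicate point is the bookkeeping that the $B$-variable blocks for distinct clients $i$ are mutually independent, so that entropy of the stacked tuple splits as a sum over $i$; once that independence is set up correctly, the rest is a direct invocation of the lemma and the elementary fact (quoted in the paper) that joint entropy of mutually independent tuples adds. A secondary caution is that the sorting permutation $\pi_i$ depends on $i$, so the suffix structure used for the max is internal to each block and does not interact across blocks — but since the blocks are independent this causes no trouble. I would also note in passing the standard quantization/scaling remark already made in the paper, so that the real-valued target entropies $H(B^{(i)}_r)=\delta^{(i)}_r$ are realized (exactly after common scaling, or arbitrarily well in the almost-entropic sense).
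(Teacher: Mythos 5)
Your proposal is correct and follows essentially the same route as the paper's proof: per-client sorting and suffix-variable construction from Lemma~\ref{lem:max_entropic}, stacking the per-client components into $X_a$, and using independence across the client blocks to split the joint entropy into a sum of per-client maxima. The only differences are notational ($\pi_i^{-1}(a)$ in place of the paper's rank map $r_i(a)$), so there is nothing to add.
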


\begin{proof}
Fix $i\in V$ and consider the multiset of weights $\{s_{ia}:a\in V\}$. Let $\pi_i$ be a permutation of $V$
such that
\[
s_{i,\pi_i(1)}\ge s_{i,\pi_i(2)}\ge \cdots \ge s_{i,\pi_i(n)}\ge 0,
\]
and define $w^{(i)}_r := s_{i,\pi_i(r)}$ for $r\in[n]$, with $w^{(i)}_{n+1}:=0$. Apply Lemma~\ref{lem:max_entropic}
to $(w^{(i)}_r)_{r=1}^n$ to obtain independent variables $(B_{i,r})_{r=1}^n$ with entropies
$H(B_{i,r})=w^{(i)}_r-w^{(i)}_{r+1}$ and the nested ``suffix'' variables
\[
Y_{i,r}:=(B_{i,r},B_{i,r+1},\dots,B_{i,n}),\qquad r\in[n],
\]
satisfying $H((Y_{i,r})_{r\in R})=\max_{r\in R} w^{(i)}_r$ for any $R\subseteq[n]$.

Now define the rank map $r_i(a)\in[n]$ as the unique index such that $a=\pi_i(r_i(a))$.
For each facility $a\in V$, define the random variable
\[
X_a := (Y_{i,r_i(a)})_{i\in V}.
\]
Independence across different $i$ implies that for any $A\subseteq V$,
\[
H(X_A)=\sum_{i\in V} H\big((Y_{i,r_i(a)})_{a\in A}\big).
\]
For fixed $i$, by Lemma~\ref{lem:max_entropic} and the ordering of weights,
\[
H\big((Y_{i,r_i(a)})_{a\in A}\big) = \max_{a\in A} s_{ia}.
\]
Summing over $i$ yields $H(X_A)=\sum_{i\in V}\max_{a\in A}s_{ia}=f(A)$.
\end{proof}

% ------------------------------------------------------------
\section{Truncations and Concave-over-Modular} \label{sec:concavemod}
% ------------------------------------------------------------

\begin{theorem}[Cardinality truncation $\min(|A|,k)$ is entropic]
\label{thm:card_trunc_entropic}
Fix integers $n\ge 1$ and $1\le k\le n$. Define $t_k(A):=\min(|A|,k)$.
Then $t_k$ is entropic up to a constant factor: there exist random variables
$X_1,\dots,X_n$ such that for all $A\subseteq[n]$,
\[
H(X_A)=\min(|A|,k)\cdot \log q
\]
for some field size $q\ge n$.
\end{theorem}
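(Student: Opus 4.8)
The plan is to realize $t_k$ as the rank function of the uniform matroid $U_{k,n}$ and then invoke the classical fact, recalled in Section~\ref{sec:entropy_polyhedral}, that rank functions of linear matroids over a finite field $\mathbb{F}_q$ are entropic up to a factor of $\log q$. Concretely, I would let $U=(U_1,\dots,U_k)$ be $k$ mutually independent random variables, each uniform on $\mathbb{F}_q$, so that $U$ is uniform on $\mathbb{F}_q^{\,k}$ and $H(U)=k\log q$. Since $q\ge n$, the field has at least $n$ elements, so we may fix distinct nodes $\alpha_1,\dots,\alpha_n\in\mathbb{F}_q$ and define, for each $i\in[n]$,
\[
X_i \;:=\; \sum_{j=1}^{k} U_j\,\alpha_i^{\,j-1}\;\in\;\mathbb{F}_q ,
\]
i.e. $X_i$ is the evaluation at $\alpha_i$ of the random polynomial with coefficient vector $U$. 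Equivalently $X_i=\langle v_i, U\rangle$ with $v_i=(1,\alpha_i,\dots,\alpha_i^{\,k-1})^\top$ the $i$-th column of a $k\times n$ Vandermonde matrix $M$.

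Next I would compute $H(X_A)$ for arbitrary $A\subseteq[n]$. Because $X_A=(X_i)_{i\in A}$ is the linear image $M_A^\top U$ of a uniform vector over $\mathbb{F}_q$ (where $M_A$ is the $k\times|A|$ submatrix of $M$ on columns $A$), its law is uniform on the image subspace, and any linear image of a uniform vector on a finite-dimensional $\mathbb{F}_q$-space is uniform on its image since the fibers are cosets of the kernel. Hence
\[
H(X_A)\;=\;\log\bigl|\mathrm{Im}(M_A^\top)\bigr|\;=\;\mathrm{rank}(M_A)\cdot\log q .
\]
It then remains to show $\mathrm{rank}(M_A)=\min(|A|,k)$. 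If $|A|\le k$, any $|A|$ of the columns $v_i$ form a Vandermonde matrix in the distinct nodes $(\alpha_i)_{i\in A}$, with determinant $\prod_{i<j}(\alpha_j-\alpha_i)\ne 0$, so $M_A$ has full column rank $|A|$; if $|A|>k$, the rank is at most $k$, and choosing any $k$ columns shows it is exactly $k$ by the same argument. Either way $\mathrm{rank}(M_A)=\min(|A|,k)$, giving $H(X_A)=\min(|A|,k)\log q=t_k(A)\log q$, while $H(X_\emptyset)=0$ holds trivially.

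The only genuinely nontrivial ingredient is this rank computation, namely the MDS (Reed--Solomon) property that every collection of at most $k$ of the columns $v_i$ is linearly independent; this reduces to nonvanishing of Vandermonde determinants over $\mathbb{F}_q$, and it is exactly here that the hypothesis $q\ge n$ is used, to guarantee $n$ distinct evaluation nodes. A minor point to verify along the way is that $X_A$ really is uniform on $\mathrm{Im}(M_A^\top)$, which is what licenses the identity $H(X_A)=\mathrm{rank}(M_A)\log q$; as noted above this is immediate from $U$ being uniform on $\mathbb{F}_q^{\,k}$. Finally, the factor $\log q$ is precisely the advertised constant scaling, and since $q$ is an integer no further quantization argument is needed.
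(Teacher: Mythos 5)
Your proposal is correct and follows essentially the same route as the paper: linear forms $X_i=\langle(1,\alpha_i,\dots,\alpha_i^{k-1}),U\rangle$ of a uniform vector $U$ over $\mathbb{F}_q^{\,k}$ at $n$ distinct nodes, with the Vandermonde/MDS property giving $\mathrm{rank}(M_A)=\min(|A|,k)$ and hence $H(X_A)=\min(|A|,k)\log q$. Your additional justification that a linear image of a uniform vector is uniform on the image subspace is a welcome detail the paper states without proof, but the argument is the same.
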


\begin{proof}
\noindent\textbf{Attribution.}
The proof uses a classical linear-entropy construction: linear measurements of a uniform random vector over a finite
field yield entropic rank functions (linear polymatroids).
Vandermonde matrices are employed to ensure that any subset of size at most $k$ is linearly independent, as in
standard MDS code and secret-sharing constructions.
See \cite{Yeung2008} for a comprehensive treatment of linear entropy functions.

Let $\mathbb{F}_q$ be a finite field with $q\ge n$. Let $U=(U_1,\dots,U_k)$ be i.i.d.\ uniform over
$\mathbb{F}_q$. Choose distinct $\alpha_1,\dots,\alpha_n\in\mathbb{F}_q$ and define for each $i\in[n]$
the linear form
\[
X_i := \sum_{j=0}^{k-1} \alpha_i^{\,j} U_{j+1}\ \in \mathbb{F}_q.
\]
For any $A\subseteq[n]$, the vector $X_A$ is a linear transformation of $U$ with matrix $M_A$ whose rows are
$(1,\alpha_i,\alpha_i^2,\dots,\alpha_i^{k-1})$ for $i\in A$. Any $\ell\le k$ rows of a Vandermonde matrix with
distinct $\alpha_i$ are linearly independent, hence
\begin{equation}
\text{rank}(M_A)=\min(|A|,k).
\end{equation}
Since $U$ is uniform and the transformation is linear, $X_A$ is uniform over an $\mathbb{F}_q$-subspace of
dimension $\text{rank}(M_A)$, implying

\begin{equation}
H(X_A)=\text{rank}(M_A)\cdot \log q = \min(|A|,k)\cdot \log q.
\end{equation}
\end{proof}

\begin{corollary}[Weighted truncation $\min(w(A),k)$ is entropic (integer weights)]
\label{cor:weighted_trunc_entropic}
Let weights $(w_i)_{i=1}^n$ be nonnegative integers and define $w(A)=\sum_{i\in A} w_i$.
Then the truncation $A\mapsto \min(w(A),k)$ is entropic up to a constant factor.
\end{corollary}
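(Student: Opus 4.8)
The plan is to reduce the weighted case to the cardinality truncation of Theorem~\ref{thm:card_trunc_entropic} by a standard element-duplication argument: replace each element $i\in[n]$ by $w_i$ identical ``copies'' and invoke the unweighted result on the enlarged ground set. First I would set $N:=w([n])=\sum_{i=1}^n w_i$ and dispose of two boundary cases. If $N=0$ then $w(A)=0$ for every $A$, so the function is identically zero and trivially entropic; and if $k\ge N$ then $\min(w(A),k)=w(A)$ for all $A$, i.e.\ the objective is the modular function with nonnegative integer weights $w_i$, which is entropic by the modular example of Section~\ref{sec:entropy_polyhedral}. So I may assume $N\ge 1$ and $1\le k\le N$, which is precisely the regime in which Theorem~\ref{thm:card_trunc_entropic} applies on a ground set of size $N$.

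Next I would partition a fresh ground set $[N]$ into pairwise disjoint blocks $P_1,\dots,P_n$ with $|P_i|=w_i$ (an empty block when $w_i=0$), so that $\bigcup_{i=1}^n P_i=[N]$ and, for every $A\subseteq[n]$, $\big|\bigcup_{i\in A}P_i\big|=\sum_{i\in A}w_i=w(A)$. Hence $\min(w(A),k)=t_k\big(\bigcup_{i\in A}P_i\big)$, where $t_k$ is the cardinality truncation on $2^{[N]}$. Applying Theorem~\ref{thm:card_trunc_entropic} to $[N]$ yields a field size $q\ge N$ and random variables $\widetilde X_1,\dots,\widetilde X_N$ over $\mathbb{F}_q$ with $H(\widetilde X_B)=\min(|B|,k)\cdot\log q$ for all $B\subseteq[N]$. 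I would then bundle:
\[
X_i:=(\widetilde X_j)_{j\in P_i}\qquad(i\in[n]),
\]
which is the (deterministic) empty tuple when $P_i=\emptyset$.

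The verification is then immediate: for $A\subseteq[n]$ the joint tuple $X_A$ lists exactly the variables $\{\widetilde X_j: j\in\bigcup_{i\in A}P_i\}$ with no repetitions, since the $P_i$ are disjoint, so
\[
H(X_A)=H\big((\widetilde X_j)_{j\in\cup_{i\in A}P_i}\big)=\min\big(\big|\textstyle\bigcup_{i\in A}P_i\big|,k\big)\cdot\log q=\min(w(A),k)\cdot\log q,
\]
giving the claimed entropic realization up to the constant factor $\log q$. I do not expect a genuine obstacle here; the only care needed is the boundary bookkeeping --- handling $k\ge N$ (truncation inactive, hence modular) and zero weights (empty blocks contributing constant variables) --- and confirming that the reduction delivers exactly the hypothesis $1\le k\le N$ required by Theorem~\ref{thm:card_trunc_entropic}, so that a suitable field size $q\ge N$ is available.
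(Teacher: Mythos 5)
Your proposal is correct and follows essentially the same route as the paper: the paper likewise creates $\sum_i w_i$ ``clones'' partitioned into groups of size $w_i$, applies Theorem~\ref{thm:card_trunc_entropic} on the clone ground set, and defines $X_i$ as the tuple of clone variables belonging to item $i$. Your additional bookkeeping for the boundary cases ($N=0$, $k\ge N$, empty blocks) is a harmless refinement that the paper leaves implicit.
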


\begin{proof}
Create $\sum_{i=1}^n w_i$ ``clones'' partitioned into groups of size $w_i$ for each item $i$.
Apply Theorem~\ref{thm:card_trunc_entropic} on the clone ground set to obtain random variables $(Y_c)$
with $H(Y_S)=\min(|S|,k)\log q$ for all clone subsets $S$.
Define $X_i$ as the tuple of $Y_c$ over clones belonging to $i$.
Then for any $A\subseteq[n]$, $X_A$ corresponds to the clone subset of size $\sum_{i\in A} w_i=w(A)$, hence
\[
H(X_A)=\min(w(A),k)\log q.
\]
\end{proof}

\begin{theorem}[Concave-over-modular from truncation mixtures is entropic]
\label{thm:concave_over_modular_entropic}
Let $w(A)=\sum_{i\in A} w_i$ with $w_i\in\mathbb{Z}_{\ge 0}$. Suppose $g:\mathbb{Z}_{\ge 0}\to\mathbb{R}_{\ge 0}$
is nondecreasing and concave with $g(0)=0$. Then there exist coefficients $(c_t)_{t\ge 1}$ with $c_t\ge 0$ such that
for all integers $x\ge 0$,
\[
g(x)=\sum_{t\ge 1} c_t\,\min(x,t),
\]
and consequently the set function $f(A):=g(w(A))$ is entropic (up to a constant factor).
\end{theorem}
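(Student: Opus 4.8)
The plan is to decompose the concave function $g$ into a nonnegative combination of cardinality-style truncation functions and then invoke linearity of entropy across independent constructions. First I would establish the representation $g(x)=\sum_{t\ge 1} c_t\,\min(x,t)$ with $c_t\ge 0$. The natural choice is to read off the coefficients from the discrete second differences of $g$: set $\Delta g(x) := g(x+1)-g(x)$ for $x\ge 0$, which is nonincreasing by concavity and nonnegative by monotonicity, and define $c_t := \Delta g(t-1)-\Delta g(t)\ge 0$ for $t\ge 1$. One then checks by a telescoping/summation-by-parts argument that $\sum_{t\ge 1} c_t\,\min(x,t)=\sum_{t=1}^{x}\Delta g(t-1) = g(x)-g(0)=g(x)$ for every integer $x\ge 0$, using $\min(x,t)=\sum_{s=1}^{t}\mathbf{1}[s\le x]$ and swapping the order of summation. (If $g$ is eventually linear, only finitely many $c_t$ are nonzero; if $g$ is strictly concave forever, the sum is infinite but converges pointwise since each $\min(x,t)$ is bounded by $x$ and the partial sums are exactly $g(\min(x,\cdot))$ — this boundedness is what makes the infinite case harmless.)

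Next I would realize each summand entropically. For each $t\ge 1$, Corollary~\ref{cor:weighted_trunc_entropic} (applied with the given integer weights $w_i$ and truncation level $t$) gives random variables $(X^{(t)}_i)_{i\in[n]}$, defined over their own independent probability space, with $H(X^{(t)}_A)=\min(w(A),t)\cdot\log q_t$ for all $A$. After a common rescaling of all the $q_t$ to a single field size $q$ (the ``remark on weights'' paragraph licenses absorbing constants), I would assume $H(X^{(t)}_A)=\min(w(A),t)$ up to the fixed factor $\log q$. Now take all the families $\{(X^{(t)}_i)\}_{t\ge 1}$ to be mutually independent across $t$, and define the composite variable $X_i := (X^{(t)}_i)_{t\ge 1}$ — more precisely $(X^{(t)}_i)_{t: c_t>0}$, weighting the $t$-th block so its entropy contribution is scaled by $c_t$ (again via alphabet sizing). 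By independence across $t$, $H(X_A)=\sum_{t\ge 1} c_t\,H(X^{(t)}_A)=\sum_{t\ge 1} c_t\,\min(w(A),t)=g(w(A))=f(A)$, which is the claim.

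The main obstacle I anticipate is the case of infinitely many nonzero $c_t$: one cannot literally form an infinite independent product of random variables with unbounded total alphabet if one wants a genuine discrete random variable. I would handle this by noting that $g$ restricted to the \emph{finite} range $\{0,1,\dots,w([n])\}$ only ``sees'' the truncation levels $t\le w([n])$; more precisely, $\min(w(A),t)=w(A)$ for all $A$ once $t\ge w([n])$, so the tail $\sum_{t> w([n])} c_t\,\min(w(A),t) = \big(\sum_{t>w([n])} c_t\big)\,w(A)$ is simply a modular function of $A$, realizable by independent single variables $Z_i$ with $H(Z_i)=\big(\sum_{t>w([n])} c_t\big)\,w_i$ (finite, since $g$ is finite-valued so $\Delta g(w([n]))<\infty$). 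Thus only finitely many truncation families $t\le w([n])$ plus one extra modular family are needed, and the construction is a genuine finite independent product. A secondary point to check is that the scaling/quantization to a common field size is simultaneously feasible for all the finitely many blocks, which is routine by taking $q$ large and clearing denominators of the rational approximations to $c_t$ and $w_i$.
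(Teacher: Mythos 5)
Your proposal is correct and follows essentially the same route as the paper: the coefficients $c_t$ are the (negated) second differences of $g$, the identity $g(x)=\sum_t c_t\min(x,t)$ is verified by the same telescoping argument, and the entropic realization concatenates independent copies of the Corollary~\ref{cor:weighted_trunc_entropic} constructions across $t$. Your explicit handling of the infinite-tail case (collapsing all levels $t>w([n])$ into a single modular block) is a slightly more careful version of the paper's remark that one may set $\Delta g(T+1)=0$ beyond the maximum relevant $T$, but it is not a different argument.
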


\begin{proof}
Define discrete first differences $\Delta g(t):=g(t)-g(t-1)$ for $t\ge 1$.
Concavity and monotonicity imply $\Delta g(t)\ge 0$ and $\Delta g(t)$ is nonincreasing in $t$.
Let $c_t := \Delta g(t)-\Delta g(t+1)\ge 0$ (with $\Delta g(t)\to 0$ for large $t$, or define $\Delta g(T+1)=0$
beyond the maximum relevant $T$).
Then for any $x\ge 0$,
\[
\sum_{t\ge 1} c_t\,\min(x,t)
=\sum_{t=1}^{x} c_t\,t + \sum_{t>x} c_t\,x
=\sum_{r=1}^{x} \Delta g(r)
=g(x),
\]
where the last equality follows by telescoping $g(x)=\sum_{r=1}^x (g(r)-g(r-1))=\sum_{r=1}^x \Delta g(r)$.
Thus $g$ is a conic combination of truncations.
By Corollary~\ref{cor:weighted_trunc_entropic}, each $A\mapsto \min(w(A),t)$ is entropic (up to scaling).
Using independent blocks of random variables for each $t$ and concatenating them shows that a nonnegative sum of
entropic functions is entropic (after matching common scaling), hence $f(A)=g(w(A))$ is entropic.
\end{proof}

% ------------------------------------------------------------
\section{Saturated Coverage} \label{sec:saturatedcov}
% ------------------------------------------------------------

\begin{theorem}[Saturated coverage is entropic]
\label{thm:saturated_coverage_entropic}
Let $U$ be a universe with weights $(w_u)_{u\in U}$ and sets $U_i\subseteq U$ for $i\in[n]$.
Let $\kappa\ge 0$. Define the \emph{saturated coverage} function
\[
f_{\mathrm{sat}}(A) := \min\Big(\sum_{u\in \cup_{i\in A} U_i} w_u,\ \kappa\Big).
\]
Then $f_{\mathrm{sat}}$ is entropic (up to a constant factor), assuming $\kappa$ is compatible with the
weight scaling (e.g., integer after a common scaling).
\end{theorem}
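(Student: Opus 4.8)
The plan is to reduce saturated coverage to the already-established entropic constructions for weighted coverage (Theorem~\ref{thm:coverage_entropic}) and for weighted truncation (Corollary~\ref{cor:weighted_trunc_entropic}). First I would observe that the inner quantity $g(A):=\sum_{u\in \cup_{i\in A} U_i} w_u$ is exactly the weighted coverage function, which by Theorem~\ref{thm:coverage_entropic} equals $H(Z^{\cup}_A)$ where $Z^{\cup}_A$ collects the independent ground-universe variables $(Z_u)$ indexed by $\cup_{i\in A} U_i$. So the task is to ``truncate'' this entropic function at $\kappa$ while staying inside the entropic region. The key structural fact I would invoke is that a weighted coverage function is itself a nonnegative modular function \emph{on the universe $U$}: writing $m(S):=\sum_{u\in S} w_u$ for $S\subseteq U$, we have $g(A)=m(\phi(A))$ where $\phi(A):=\cup_{i\in A} U_i$ is a (monotone) map from $2^{[n]}$ into $2^{U}$. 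Thus $f_{\mathrm{sat}}(A)=\min\bigl(m(\phi(A)),\kappa\bigr)$ is a truncated modular function pulled back through $\phi$.

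Next I would apply the truncation machinery at the level of the universe. By Corollary~\ref{cor:weighted_trunc_entropic} (after the standard integer-scaling step flagged in the ``Remark on weights'' and in the theorem hypothesis), the set function $S\mapsto \min(m(S),\kappa)$ on ground set $U$ is entropic up to a common constant factor $\log q$: there exist random variables $(Y_u)_{u\in U}$ with $H(Y_S)=\min(m(S),\kappa)\cdot\log q$ for all $S\subseteq U$. Concretely this is the clone-and-Vandermonde construction: replace each $u$ by $w_u$ clones, run Theorem~\ref{thm:card_trunc_entropic} with threshold $\kappa$ on the clone set, and let $Y_u$ be the tuple of clone-variables belonging to $u$. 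Now define, for each $i\in[n]$,
\[
X_i := (Y_u)_{u\in U_i},
\]
i.e., pull the universe-level variables back along the set system, exactly as in the proof of Theorem~\ref{thm:coverage_entropic}. For any $A\subseteq[n]$ the joint tuple $X_A$ contains precisely the variables $\{Y_u : u\in \cup_{i\in A} U_i\}$ (with harmless duplicates), so $H(X_A)=H(Y_{\phi(A)})=\min\bigl(m(\phi(A)),\kappa\bigr)\cdot\log q = f_{\mathrm{sat}}(A)\cdot\log q$. This gives the claimed entropic realization up to the constant factor $\log q$, matching the statement's ``up to a constant factor'' caveat.

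The only genuinely delicate point is the scaling/quantization bookkeeping, and it is the step I would be most careful about. The Vandermonde construction of Theorem~\ref{thm:card_trunc_entropic} produces entropies that are integer multiples of $\log q$, so all weights $w_u$ and the threshold $\kappa$ must first be rescaled to nonnegative integers by a common multiplier; this is exactly the hypothesis ``$\kappa$ is compatible with the weight scaling.'' One also needs $q\ge \kappa$ (the number of ``information coordinates'' in the truncation construction), which is harmless since $q$ may be taken as large as desired. For genuinely real-valued weights one instead states the almost-entropic version by approximating the $w_u$ and $\kappa$ arbitrarily well by rationals, as already noted in the remark on weights; no new idea is required. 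Apart from this standard bookkeeping, the argument is just the composition ``modular on $U$'' $\circ$ ``truncation'' $\circ$ ``pullback along $\phi$,'' each factor of which is entropic by a result proved above, and entropic functions are closed under exactly these operations (disjoint-block concatenation for sums, and the duplication-insensitivity of joint entropy for the pullback).
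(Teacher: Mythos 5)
Your proposal is correct, and it uses the same two ingredients as the paper (the coverage construction of Theorem~\ref{thm:coverage_entropic} plus the weighted-truncation machinery of Corollary~\ref{cor:weighted_trunc_entropic}), but you compose them in a genuinely different --- and in fact more careful --- order. The paper's proof invokes Corollary~\ref{cor:weighted_trunc_entropic} ``with modular function $W(A)$'' where $W(A)=\sum_{u\in\cup_{i\in A}U_i}w_u$; but $W$ is a coverage function on the ground set $[n]$, not a modular one (it is modular only when the $U_i$ are disjoint), so the corollary does not literally apply as stated. You instead apply the truncation at the level of the universe $U$, where $m(S)=\sum_{u\in S}w_u$ genuinely is modular, obtaining variables $(Y_u)_{u\in U}$ with $H(Y_S)=\min(m(S),\kappa)\log q$, and only then pull back along $\phi(A)=\cup_{i\in A}U_i$ by setting $X_i=(Y_u)_{u\in U_i}$; the duplication-insensitivity of joint entropy gives $H(X_A)=H(Y_{\phi(A)})=f_{\mathrm{sat}}(A)\log q$. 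This is exactly the formalization the paper's argument needs, so your route is the rigorous version of the intended proof rather than a departure from it. One tiny bookkeeping slip: the field-size requirement in the clone-and-Vandermonde step is $q\ge(\text{number of clones})=\sum_u w_u$ (so that distinct evaluation points $\alpha$ exist), not $q\ge\kappa$; since $q$ can be taken arbitrarily large this does not affect the argument.
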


\begin{proof}
By Theorem~\ref{thm:coverage_entropic}, there exist random variables $(X_i)_{i=1}^n$ such that
$H(X_A)=\sum_{u\in \cup_{i\in A}U_i} w_u$.
Let $W(A):=H(X_A)$ denote this (modular-in-components) entropy.
By Corollary~\ref{cor:weighted_trunc_entropic} with modular function $W(A)$ (after a common scaling to integers),
the truncation $A\mapsto \min(W(A),\kappa)$ is entropic up to scaling.
Since $\min(W(A),\kappa)=f_{\mathrm{sat}}(A)$ by definition, the claim follows.
\end{proof}

% ------------------------------------------------------------
\section{Monotone Graph-Cut--Like Function is Entropic} \label{sec:graph-cut}
% ------------------------------------------------------------

\begin{theorem}[Monotone graph-cut--like objective is entropic for $\lambda\le \tfrac12$]
\label{thm:monotone_graphcut_entropic}
Let $V=[n]$ and let $S=(s_{ij})$ be symmetric with $s_{ij}\ge 0$ and $s_{ii}=0$.
For $\lambda\in[0,\tfrac12]$, define
\[
F(A):=\sum_{i\in A}\sum_{j\in V} s_{ij} \;-\; \lambda \sum_{i\in A}\sum_{j\in A} s_{ij},\qquad A\subseteq V.
\]
Then $F$ is entropic.
\end{theorem}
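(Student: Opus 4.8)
The plan is to realize $F$ as a nonnegative combination of entropic functions already established in the paper. The key structural observation is that for symmetric $S$ with zero diagonal, the quadratic penalty term $\sum_{i\in A}\sum_{j\in A}s_{ij}$ equals $2\sum_{\{i,j\}\subseteq A}s_{ij}$, i.e.\ twice the total weight of edges internal to $A$. Writing $d_i := \sum_{j\in V}s_{ij}$ for the weighted degree, we have $\sum_{i\in A}\sum_{j\in V}s_{ij} = \sum_{i\in A}d_i$, while $\sum_{i\in A}d_i - \sum_{\{i,j\}\subseteq A}s_{ij}\cdot(\text{something})$ must be reorganized per edge. Concretely, I would rewrite
\[
F(A) = \sum_{i\in A}d_i - 2\lambda\sum_{\{i,j\}:\,i,j\in A} s_{ij}.
\]
Now distribute the degree term over edges: each edge $\{i,j\}$ with weight $s_{ij}$ contributes $s_{ij}$ to $d_i$ and $s_{ij}$ to $d_j$, so $\sum_{i\in A}d_i = \sum_{\{i,j\}}s_{ij}\cdot\big(\mathbf{1}[i\in A]+\mathbf{1}[j\in A]\big)$. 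Hence
\[
F(A) = \sum_{\{i,j\}} s_{ij}\,\phi_\lambda\big(\mathbf{1}[i\in A],\mathbf{1}[j\in A]\big),
\qquad
\phi_\lambda(a,b) := a+b-2\lambda\,ab.
\]

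The core of the argument is then a per-edge entropic gadget: for a fixed pair $\{i,j\}$, I need random variables whose joint entropy realizes the $2$-element set function $A\mapsto \phi_\lambda(\mathbf{1}[i\in A],\mathbf{1}[j\in A])$, which takes values $0$ on $\emptyset$, $1$ on each singleton, and $2-2\lambda$ on the pair. Since $\lambda\le\tfrac12$, the pair value $2-2\lambda\in[1,2]$, so this is a valid normalized monotone submodular function on two elements (submodularity on $n=2$ is automatic: $1+1\ge (2-2\lambda)+0$ iff $\lambda\ge 0$). Because every normalized monotone submodular function on $\le 3$ elements is entropic (the excerpt notes $\overline{\Gamma_n^\ast}=\Gamma_n$ for $n\le 3$), such a realization exists; but I would give it explicitly rather than invoke closure. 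The clean construction: let $P$, $Q_i$, $Q_j$ be mutually independent with $H(P) = 2-2\lambda - \text{(something)}$… more directly, set $H(Q_i)=H(Q_j)=\lambda$ (the ``private'' parts) and $H(P)=1-\lambda$ (a ``shared'' part), then put $X_i^{(ij)}:=(P,Q_i)$ if $i\in A$-coordinate is active, i.e.\ assign to element $i$ the variable $(P,Q_i)$ and to element $j$ the variable $(P,Q_j)$. Then a singleton gives $H(P)+H(Q_\bullet)=1$, and the pair gives $H(P)+H(Q_i)+H(Q_j)=(1-\lambda)+\lambda+\lambda=1+\lambda$. That yields pair value $1+\lambda$, not $2-2\lambda$; so instead I would take $H(P)=2\lambda$ shared and $H(Q_i)=H(Q_j)=1-2\lambda$ private, giving singleton $=2\lambda+(1-2\lambda)=1$ and pair $=2\lambda+2(1-2\lambda)=2-2\lambda$, exactly $\phi_\lambda$ on the pair. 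The constraint $\lambda\le\tfrac12$ is precisely what makes $H(Q_\bullet)=1-2\lambda\ge 0$, which is the crux of why the hypothesis is needed.

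Finally I would assemble: for each edge $e=\{i,j\}$ take an independent copy of the gadget scaled by $s_{ij}$ (using the weight-scaling remark in the paper to handle non-integer $s_{ij}$, or stating the almost-entropic version), so element $k\in V$ receives the variable $X_k := \big(X_k^{(e)}\big)_{e\ni k}$, a tuple over all edges incident to $k$, with the understanding that $X_k^{(e)}$ is the shared-plus-private bundle $(P_e, Q_{k,e})$ of entropy $s_{ij}$. By independence across edges, $H(X_A)=\sum_e s_e\cdot\phi_\lambda(\mathbf{1}[i_e\in A],\mathbf{1}[j_e\in A]) = F(A)$. The main obstacle to watch is purely bookkeeping: making sure the per-edge gadget variable assigned to $i$ versus $j$ uses the \emph{same} shared component $P_e$ but distinct private components, and that vertices not incident to $e$ contribute nothing — so $F$ is literally a sum over edges of (edge-weight) $\times$ (a fixed $2$-variable entropic function embedded on the relevant two coordinates). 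I would also note in passing that for $\lambda>\tfrac12$ the function can fail monotonicity (adding a high-degree vertex can decrease $F$), so the hypothesis is not merely an artifact of the construction.
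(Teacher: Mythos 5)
Your proposal is correct and is essentially identical to the paper's proof: both decompose $F$ edge-by-edge and realize each edge via a shared variable of entropy $2\lambda s_{ij}$ plus two endpoint-private variables of entropy $(1-2\lambda)s_{ij}$, with the constraint $\lambda\le\tfrac12$ guaranteeing nonnegativity of the private entropies. The only difference is expository (your explicit $\phi_\lambda$ bookkeeping and the corrected false start), not mathematical.
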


\begin{proof}
View $s_{ij}$ as weights on undirected edges $e=\{i,j\}$ for $i<j$, and write $w_e:=s_{ij}$.
For each edge $e=\{i,j\}$, create three mutually independent random variables:
a shared variable $Z_e$ and two endpoint-private variables $U_{e,i}$ and $U_{e,j}$, with entropies
\[
H(Z_e)=2\lambda\,w_e,\qquad H(U_{e,i})=H(U_{e,j})=(1-2\lambda)\,w_e.
\]
(All are nonnegative since $\lambda\in[0,\tfrac12]$.) Assume all such variables are independent across edges.

Define for each vertex $i\in V$,
\[
X_i := \Big((Z_e)_{e\ni i},\ (U_{e,i})_{e\ni i}\Big),
\]
i.e., $X_i$ contains the shared variable for every incident edge and the private variable corresponding
to that vertex on each incident edge.

Fix $A\subseteq V$. Since all base variables are independent across edges, $H(X_A)$ equals the sum over edges
of entropies of the distinct base variables included by selecting endpoints in $A$. For a fixed edge $e=\{i,j\}$:

\begin{itemize}
\item If $|\{i,j\}\cap A|=0$, $X_A$ includes none of $\{Z_e,U_{e,i},U_{e,j}\}$, contributing $0$.
\item If $|\{i,j\}\cap A|=1$, say $i\in A$, then $X_A$ includes $\{Z_e,U_{e,i}\}$, contributing
$H(Z_e)+H(U_{e,i})=2\lambda w_e+(1-2\lambda)w_e=w_e$.
\item If $|\{i,j\}\cap A|=2$, then $X_A$ includes $\{Z_e,U_{e,i},U_{e,j}\}$, contributing
$H(Z_e)+H(U_{e,i})+H(U_{e,j})=2\lambda w_e+2(1-2\lambda)w_e=2(1-\lambda)w_e$.
\end{itemize}

On the other hand, expanding $F(A)$ edge-by-edge for symmetric $s_{ij}$ shows that an edge contributes:
$w_e$ if exactly one endpoint is in $A$, and $2(1-\lambda)w_e$ if both endpoints are in $A$.
Thus the per-edge contributions match, and summing over edges yields $H(X_A)=F(A)$ for all $A$.
\end{proof}

% ============================================================
% End
% ============================================================

\section{Conclusion}
This paper addresses a basic question at the interface of information theory and submodular optimization: \emph{when do the submodular functions that dominate modern applications admit genuine Shannon-theoretic interpretations?}
While Shannon entropy is a canonical source of submodularity \cite{Shannon1948,Yeung2008}, the entropy region is strictly smaller than the polymatroid cone for $n\ge 4$ \cite{ZhangYeung1997,ZhangYeung1998,DoughertyFreilingZeger2011,Matus2013AlmostEntropic}, implying the existence of polyhedral submodular functions that are not entropic.
Despite this separation, our results show that a broad family of submodular objectives used most frequently in practice \emph{are} entropic and admit simple, explicit probabilistic constructions.

We provided constructive entropic realizations for weighted coverage/set cover, facility location, truncations and concave-over-modular functions, saturated coverage, and a monotone graph-cut–type objective.
Each construction is modular and interpretable: coverage arises from unions of independent element variables, facility location from nested functional dependence realizing max operators, truncations from linear-algebraic rank constructions and concave mixtures, and monotone graph cut from a shared/private edge decomposition.

A central implication is conceptual: for these function classes, \emph{combinatorial information measures reduce exactly to classical information measures}.
In SIM frameworks, submodular mutual information, conditional gain, and submodular conditional mutual information are defined by replacing entropy with a generic submodular function \cite{iyer2021generalized,iyer2021submodular,asnani2021independence}.
Our entropic realizations show that, for coverage, facility-location, truncation, and graph-based objectives, these quantities coincide with Shannon-theoretic mutual information, conditional entropy, and conditional mutual information.
Thus, the information-theoretic language used in SIM-based applications is literal rather than metaphorical, providing a principled bridge between the classical geometry of entropy and polymatroids \cite{Yeung2008,ZhangYeung1998} and submodular objectives used in data-centric learning pipelines \cite{kothawade2021similar,kothawade2022prism,karanam2022orient,li2022platinum,majee2024score,beck2024theoretical}.

More broadly, our results help explain why a small set of classical submodular functions has repeatedly emerged as effective modeling primitives: these functions lie naturally inside the entropic region and inherit information-theoretic identities and intuition.
Beyond conceptual clarity, explicit entropic constructions may enable new analysis tools for submodular objectives by importing classical inequalities, chain rules, and decomposition arguments into SIM-based learning settings.

Several directions remain open.
Mapping a more complete landscape of practical submodular functions into (or outside) the entropic region, sharpening parameter thresholds in graph-based constructions, and characterizing which compositions preserve entropic realizability are all promising avenues.
Overall, although the entropy region is strictly smaller than the polymatroid cone in general, our results show that many of the submodular objectives that matter most in practice lie squarely within it, strengthening the foundations of combinatorial information measures and their connection to classical information theory.

\bibliography{refs}

\begin{thebibliography}{10}

\bibitem{asnani2021independence}
H.~Asnani, J.~Bilmes, and R.~Iyer.
\newblock Independence properties of generalized submodular information
  measures.
\newblock In {\em 2021 IEEE International Symposium on Information Theory
  (ISIT)}, pages 999--1004. IEEE, 2021.

\bibitem{beck2024theoretical}
N.~Beck, T.~Pham, and R.~Iyer.
\newblock Theoretical analysis of submodular information measures for targeted
  data subset selection.
\newblock {\em arXiv preprint arXiv:2402.13454}, 2024.

\bibitem{bilmes2022submodularity}
J.~Bilmes.
\newblock Submodularity in machine learning and artificial intelligence.
\newblock {\em arXiv preprint arXiv:2202.00132}, 2022.

\bibitem{DoughertyFreilingZeger2011}
R.~Dougherty, C.~Freiling, and K.~Zeger.
\newblock Non-{S}hannon information inequalities in four random variables.
\newblock {\em arXiv preprint arXiv:1104.3602}, 2011.

\bibitem{Edmonds1970}
J.~Edmonds.
\newblock Submodular functions, matroids, and certain polyhedra.
\newblock In {\em Combinatorial Structures and Their Applications}. Gordon and
  Breach, 1970.

\bibitem{Fujishige2005}
S.~Fujishige.
\newblock {\em Submodular Functions and Optimization}, volume~58 of {\em Annals
  of Discrete Mathematics}.
\newblock Elsevier, 2 edition, 2005.

\bibitem{iyer2015polyhedral}
R.~Iyer and J.~Bilmes.
\newblock Polyhedral aspects of submodularity, convexity and concavity.
\newblock {\em arXiv preprint arXiv:1506.07329}, 2015.

\bibitem{iyer2021submodular}
R.~Iyer, N.~Khargoankar, J.~Bilmes, and H.~Asanani.
\newblock Submodular combinatorial information measures with applications in
  machine learning.
\newblock In {\em Algorithmic Learning Theory}, pages 722--754. PMLR, 2021.

\bibitem{iyer2021generalized}
R.~Iyer, N.~Khargonkar, J.~Bilmes, and H.~Asnani.
\newblock Generalized submodular information measures: Theoretical properties,
  examples, optimization algorithms, and applications.
\newblock {\em IEEE Transactions on Information Theory}, 68(2):752--781, 2021.

\bibitem{karanam2022orient}
A.~Karanam, K.~Killamsetty, H.~Kokel, and R.~Iyer.
\newblock Orient: Submodular mutual information measures for data subset
  selection under distribution shift.
\newblock {\em Advances in neural information processing systems},
  35:31796--31808, 2022.

\bibitem{kothawade2021similar}
S.~Kothawade, N.~Beck, K.~Killamsetty, and R.~Iyer.
\newblock Similar: Submodular information measures based active learning in
  realistic scenarios.
\newblock {\em Advances in Neural Information Processing Systems},
  34:18685--18697, 2021.

\bibitem{kothawade2022talisman}
S.~Kothawade, S.~Ghosh, S.~Shekhar, Y.~Xiang, and R.~Iyer.
\newblock Talisman: targeted active learning for object detection with rare
  classes and slices using submodular mutual information.
\newblock In {\em European Conference on Computer Vision}, pages 1--16.
  Springer, 2022.

\bibitem{kothawade2022prism}
S.~Kothawade, V.~Kaushal, G.~Ramakrishnan, J.~Bilmes, and R.~Iyer.
\newblock Prism: A rich class of parameterized submodular information measures
  for guided data subset selection.
\newblock In {\em Proceedings of the AAAI Conference on Artificial
  Intelligence}, volume~36, pages 10238--10246, 2022.

\bibitem{li2022platinum}
C.~Li, S.~Kothawade, F.~Chen, and R.~Iyer.
\newblock Platinum: semi-supervised model agnostic meta-learning using
  submodular mutual information.
\newblock In {\em International Conference on Machine Learning}, pages
  12826--12842. PMLR, 2022.

\bibitem{majee2024score}
A.~Majee, S.~Kothawade, K.~Killamsetty, and R.~Iyer.
\newblock Score: submodular combinatorial representation learning.
\newblock {\em In Proc. International Conference on Machine Learning, ICML},
  2024.

\bibitem{majee2024smile}
A.~Majee, R.~Sharp, and R.~Iyer.
\newblock Smile: Leveraging submodular mutual information for robust few-shot
  object detection.
\newblock In {\em European Conference on Computer Vision}, pages 350--366.
  Springer, 2024.

\bibitem{Matus2013AlmostEntropic}
F.~Mat{\'u}{\v{s}}.
\newblock Entropy region and convolution.
\newblock {\em arXiv preprint arXiv:1310.5957}, 2013.

\bibitem{Schrijver2003}
A.~Schrijver.
\newblock {\em Combinatorial Optimization: Polyhedra and Efficiency}.
\newblock Springer, 2003.

\bibitem{Shannon1948}
C.~E. Shannon.
\newblock A mathematical theory of communication.
\newblock {\em Bell System Technical Journal}, 27(3):379--423, 1948.

\bibitem{Yeung2008}
R.~W. Yeung.
\newblock {\em Information Theory and Network Coding}.
\newblock Springer, 2008.

\bibitem{ZhangYeung1997}
Z.~Zhang and R.~W. Yeung.
\newblock A non-{S}hannon-type conditional information inequality.
\newblock {\em IEEE Transactions on Information Theory}, 43(6):1982--1986,
  1997.

\bibitem{ZhangYeung1998}
Z.~Zhang and R.~W. Yeung.
\newblock On characterization of entropy function via information inequalities.
\newblock {\em IEEE Transactions on Information Theory}, 44(4):1440--1452,
  1998.

\end{thebibliography}
\bibliographystyle{abbrv}
\end{document}